\def\BibTeX{{\rm B\kern-.05em{\sc i\kern-.025em b}\kern-.08em
		T\kern-.1667em\lower.7ex\hbox{E}\kern-.125emX}}
\newtheorem{theorem}{\bf \emph{Theorem}}
\newtheorem{lemma}[theorem]{\bf \emph{Lemma}}
\newtheorem{proposition}[theorem]{Proposition}
\newtheorem{example}{Example}
\begin{document}

	\title{State-Derivative Feedback Control for Damping Low-Frequency Oscillations in Bulk Power Systems
    \\
		\thanks{\color{black} This work is funded in part by the National Science Foundation (NSF) grant ECCS-2145063.
        }
	}
	
\author{
\IEEEauthorblockN{
MST Rumi Akter$^{1}$, Anamitra Pal$^{2}$, and Rajasekhar Anguluri$^{1}$
}
\IEEEauthorblockA{
$^{1}$\textit{Department of Computer Science and Electrical Engineering, UMBC, Baltimore, MD 21250, USA}\\
$^{2}$\textit{School of Electrical, Computer, and Energy Engineering, ASU, Tempe, AZ 85281, USA}\\
Emails: \texttt{makter2@umbc.edu}, \texttt{anamitra.pal@asu.edu},
\texttt{rajangul@umbc.edu}
}
}

	\maketitle

\begin{abstract}
Low-frequency
oscillations remain a major challenge in 
bulk power systems with
high renewable penetration, long lines, and 
large loads.
Existing damping strategies based on power modulation of
high voltage DC (HVDC) or 
energy storage, are often limited by fixed control architectures, leaving some modes poorly damped. This paper introduces a state-derivative feedback (SDF) damping controller that uses both frequency and its rate of change as feedback signals. Incorporating state derivatives enhances modal damping and accelerates frequency recovery, enabling HVDC and energy storage
to effectively stabilize the grid. We evaluate the SDF controller on two- and three-area systems and compare performance with a
frequency
difference-based damping scheme. 
{\color{black} Results show that the SDF control reproduces state-feedback performance while providing good damping of both inter-  and 
intra-area
oscillations compared to the frequency-difference method, highlighting its potential as a practical solution for stabilizing power-electronics-rich grids.}
\end{abstract}

\begin{IEEEkeywords}
Data center load, Low-frequency
oscillation, Power system damping, State-derivative feedback. 
\end{IEEEkeywords} 

\section{Introduction}
\label{Intro}
The problem of damping unwanted low-frequency
oscillations (LFOs) in large power systems remains both theoretically intriguing and practically challenging. Over several decades, researchers have proposed a range of solutions, spanning from advanced control algorithms to state-of-the-art hardware implementations \cite{pal2013applying,pierre2019design}. Yet the issue persists, driven by the evolving dynamics of grids transitioning from conventional synchronous generation to hybrid architectures dominated by renewable sources and large loads. Thus, there remains a growing need for new 
controllers capable of effectively damping oscillations triggered by unexpected faults or load variations. 

The present work proposes a feedback control strategy that leverages 
area frequencies and
their derivatives (rate-of-change of frequency, RoCoF) to enhance system-wide oscillation damping. The motivation for using derivatives of states (rotor speeds and accelerations) instead of the states (rotor angles and speeds) is 
twofold: (i) the states 
need to be estimated
through \textit{nonlinear dynamic state estimation} methods~\cite{chai2025robust}; (ii) the derivatives of the states are related to phasor measurement unit (PMU) measurements via the (\textit{linear}) \textit{frequency divider formula} (FDF) \cite{milano2017frequency,sharma2025practical}. Because linear problems are solved more quickly
and accurately than equivalent nonlinear problems, it is more effective
to use state derivatives (instead of the states) as feedback for damping oscillations.

From an implementation standpoint, the proposed control can be realized through power modulation via actuators such as high voltage DC (HVDC) links or energy storage systems (ESSs), both of which have been previously employed
for LFO
damping~\cite{pal2013applying,neely2013benefits}.
However, instead of using the measured frequency differences as in \cite{neely2013benefits}, our work makes these actuators either inject or absorb active power based on both frequency and RoCoF signals. 
Finally, we
demonstrate the effectiveness of the proposed control law on benchmark power system models through detailed simulations under diverse system conditions (faults, data center load variations).
The key contributions of this work are as follows:
\begin{enumerate}
\item For an $N$-area system, 
a state-derivative feedback (SDF) control law is developed 
that yields the same closed-loop response as the state feedback (SF) control.
\item A detailed physical justification is provided for the assumptions required in SDF design, including the nonsingularity of the state matrix. A new result is presented linking this property to the synchronizing matrix, along with strategies to ensure nonsingularity.
\item The ability
of the proposed control to damp LFOs is
demonstrated on two- and three-area systems in presence of measurement errors, faults, and large load variations.
\end{enumerate}
	
	\smallskip
\textit{Related work}: {\color{black} 
Although derivative-based feedback systems have been developed for power systems (e.g., \cite{rakhshani2016inertia}), the focus has often been on single-loop designs with ad hoc gain tuning. Similarly, although prior works on LFO
damping have employed state-space methods to damp oscillations across multiple areas and operating conditions \cite{pal2013applying, neely2013benefits, Xu2021direct}, 
there has been 
very little work on using derivative signals for feedback control of oscillations \cite{liu2020model}.

In mechanical and structural systems, optimal-control-based SDF methods have been extensively studied for vibration and noise suppression \cite{abdelaziz2005state,da2012less,cardim2007design}.
Under mild assumptions, these studies show that SDF and conventional SF controllers share the same design structure and produce identical closed-loop responses. Building on this insight and design procedure, we demonstrate how frequency and RoCoF signals can be used for effective oscillation damping in power systems.}

\section{Problem Setup and Preliminaries}\label{sec: prelims}

In accordance with the logic developed in \cite{neely2013benefits},
we split the power system into multiple areas connected by tie-lines, with each area represented
by aggregated inertia and damping coefficients
(see Fig.~\ref{fig: two-area-motivation}). For an $N$-area system, let $N \times N$ diagonal matrices $M$ and $D$, respectively, contain the inertia and damping coefficients; and $T$ be the matrix of synchronizing torque coefficients. Then, the classical small-signal dynamics of the power system can
be expressed as:
\begin{equation}\label{eq: MMPS}
\begin{bmatrix}
    \Delta \dot{\delta} \\
\Delta \dot{\omega}
\end{bmatrix}\!=\!
\begin{bmatrix}
    0 & I \\
-M^{-1} T & -M^{-1} D
\end{bmatrix}
\begin{bmatrix}
    \Delta \delta \\
\Delta \omega
\end{bmatrix}
\!+\!\begin{bmatrix}
    0\\
    M^{-1}
\end{bmatrix}(u+\Delta P), 
\end{equation}
where the $N$-vectors $\Delta \delta$, $\Delta \omega$, and $\Delta P$ represent the deviations in relative rotor angles, speeds, and power balance (e.g., load mismatch or faults), respectively\footnote{The inertia constants have units of seconds (s), the relative rotor angles $\delta$ are in radians (rad), and the other quantities are in per unit (p.u.). 
}. Area frequency and speed deviations are treated equivalently under p.u. and small-signal assumptions.

The 
control input $u$ 
represents the active power injected or absorbed\footnote{In practice, this modulation can be achieved through HVDC links, fast-responding ESSs, or their combination.} at the tie-line buses. 
In its simplest form, the control input is chosen proportional to the difference in area frequencies; so for the two-area case shown in Fig.~\ref{fig: two-area-motivation}: 
\begin{align}\label{eq: simple_control_two_area_system}
    \begin{bmatrix}
        u_1\\
        u_2
    \end{bmatrix}&=-K_d\begin{bmatrix}
        1 & -1\\
        -1 & 1
    \end{bmatrix}\begin{bmatrix}
        \Delta \omega_1\\ 
        \Delta \omega_2
    \end{bmatrix}. 
\end{align}

Note that $u_1\!=\!-K_d(\Delta \omega_1 - \Delta \omega_2)$ with $u_2 = -u_1$. The gain $K_d$ is selected based on design considerations---the damping control block in Fig.~\ref{fig: two-area-motivation} implements this scheme. Three-area extensions of this scheme, representing the WECC model, are detailed in \cite{neely2013benefits}.
Note that this
analysis is generic and holds for arbitrary $N$; however, for illustration and simulation purposes, these smaller systems are used.

Although simple, the damping control input based solely on frequency differences \cite{neely2013benefits} is restrictive for several reasons:
\begin{enumerate}
\item It focuses on minimizing inter-area oscillations, while neglecting damping 
of intra-area frequencies and angles.
\item It enforces uniform damping for both $u_1$ and $u_2$.
\item It fails to accommodate additional feedback signals, such as RoCoF, that are readily available from PMUs.
\end{enumerate}

To overcome these limitations, we propose an SDF control strategy that utilizes area frequencies and RoCoFs to damp both absolute and differential oscillations in system states, not limited to frequencies alone.

Intuitively, one might expect that incorporating additional feedback signals offers greater flexibility for control. However, for the specific system considered in \eqref{eq: MMPS}, it is not immediately clear why frequency (or speed-deviation) and RoCoF feedback signals are compatible, given that the state variables of \eqref{eq: MMPS} are angles and frequencies. We address this dilemma by first studying the SDF control in a generic state-space form, and then relating it to the power system model.

  	\begin{figure}
		\centering
		\includegraphics[width=0.95\linewidth]{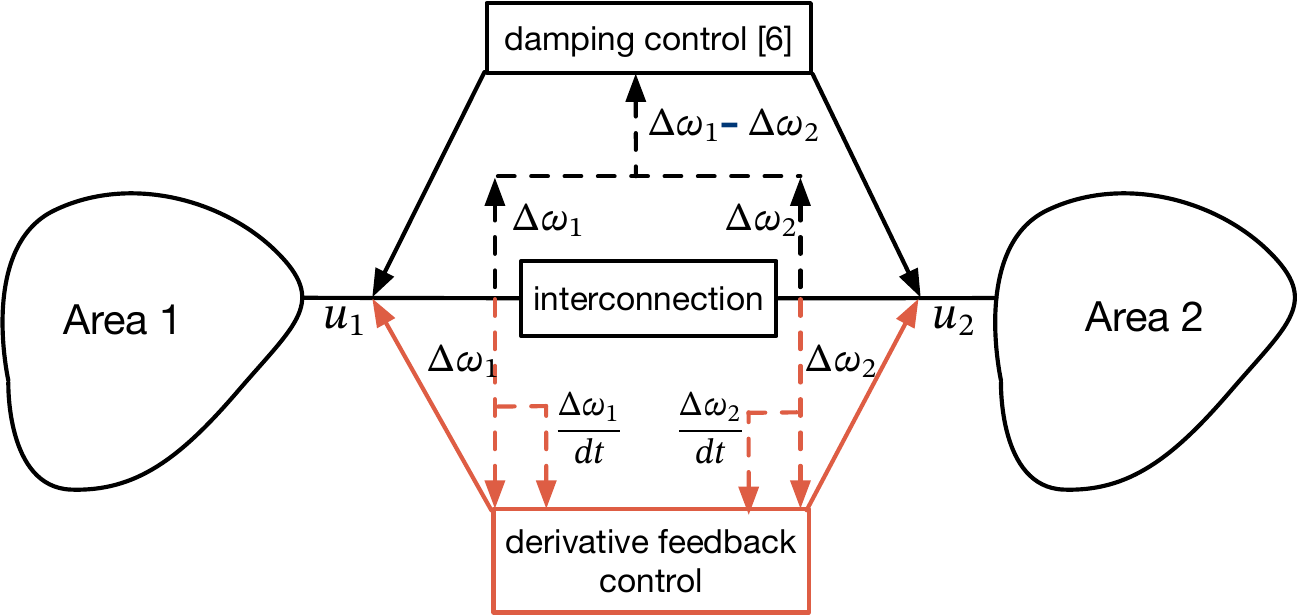}
		\caption{\small {\bf  Damping controller in \cite{neely2013benefits} vs. the proposed SDF control:} 
Two power-system areas are connected by a tie-line interconnection, which may be an HVDC transmission link or equipped with an energy storage device. The damping controller in \cite{neely2013benefits} regulates the source or sink power ($u$) in each area proportional to the frequency difference $\Delta\omega_1 - \Delta\omega_2$. In contrast, the SDF control regulates this power based on the individual frequencies $(\Delta\omega_1, \Delta\omega_2)$ and their derivatives $(\Delta\dot{\omega}_1, \Delta\dot{\omega}_2)$, providing greater flexibility through derivative terms and without restricting the control actuation via difference quantities.}
\label{fig: two-area-motivation}
        \end{figure}
\setlength{\textfloatsep}{2pt plus 1.0pt minus 2.0pt} 

\subsection{Primer on State-Derivative Feedback Control}\label{sec: primer}
Consider the controllable linear time-invariant dynamical system:
\begin{equation}
    \dot{x} = A x + B u,
    \label{eq: LTI}
\end{equation}
where the state $x \in \mathbb{R}^n$ and the control $u \in \mathbb{R}^m$.
The SF control law is given by:
\begin{equation}
    u_{\mathrm{SF}} = -K_s x + N_s r,
    \label{eq: SF}
\end{equation}
where $r \in \mathbb{R}^m$ is the reference, and matrices 
$K_s\in \mathbb{R}^{m\times n}$ and $N_s \in \mathbb{R}^{m\times m}$ are selected to achieve the desired closed-loop performance. Analogously, define the SDF control law:
\begin{equation}
    u_{\mathrm{SDF}} = -K_n \dot{x} + N_n r.
    \label{eq: SDF}
\end{equation}

Observe that the feedback signals are $x$ in~\eqref{eq: SF} and $\dot{x}$ in~\eqref{eq: SDF}.
Using $K_s$, $N_s$, and $r$ from~\eqref{eq: SF}, the SDF design problem is to 
determine matrices $K_n$ and $N_n$ such that $u_{\mathrm{SF}} = u_{\mathrm{SDF}}$. 
Thus, the state response under~\eqref{eq: SF} is identical 
to that under~\eqref{eq: SDF}. 

{\color{black} We now present a standard result from the SDF literature, stated here without proof. We use this result later to develop the derivative feedback law for the power system application.} Assumptions below are standard\cite{cardim2007design, Kiritsis2023SDF}: 
\begin{enumerate}[label=(\roman*)]
    \item The matrix $A$ is non-singular; that is, $A$ has full rank. 
    \item $\det(A - B K_s) \neq 0$;
    \item The input matrix $B$ has full column rank $m$.
\end{enumerate}

\begin{theorem}(\cite{cardim2007design})\label{thm: SDF}
Consider the system given by~\eqref{eq: LTI} under the control law in~\eqref{eq: SF}. 
Let Assumptions~(i)--(iii) hold. Define
\begin{subequations}\label{eq: SDF_gains}
\begin{align}
    K_n &= K_s (A - B K_s)^{-1}, \label{eq: Kn} \\
    N_n &= (I + K_n B) N_s.      \label{eq: Nn}
\end{align}
\end{subequations}
Then $u_{\mathrm{SF}} = u_{\mathrm{SDF}}$.
\end{theorem}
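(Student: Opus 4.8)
The plan is to eliminate the unmeasured state $x$ from the state-feedback law~\eqref{eq: SF} by re-expressing it through the closed-loop derivative $\dot{x}$, and then to recover $K_n$ and $N_n$ by matching coefficients against~\eqref{eq: SDF}. Substituting $u_{\mathrm{SF}}$ into the plant~\eqref{eq: LTI} yields the closed-loop dynamics $\dot{x} = (A-BK_s)x + BN_s r$. By Assumption~(ii) the matrix $A-BK_s$ is invertible, so the state can be solved for as
\[
x = (A-BK_s)^{-1}\bigl(\dot{x} - BN_s r\bigr).
\]
This is the one genuinely substantive step: it trades the state for a combination of its own derivative and the reference.

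Next I would insert this expression for $x$ back into $u_{\mathrm{SF}} = -K_s x + N_s r$, collect the $\dot{x}$ and $r$ terms, and obtain
\[
u_{\mathrm{SF}} = -K_s(A-BK_s)^{-1}\dot{x} + \bigl(I + K_s(A-BK_s)^{-1}B\bigr)N_s r.
\]
The coefficient of $\dot{x}$ is exactly $-K_n$ by the definition in~\eqref{eq: Kn}, and the coefficient of $r$ is $(I+K_nB)N_s = N_n$ by~\eqref{eq: Nn}. Hence $u_{\mathrm{SF}} = -K_n\dot{x} + N_n r = u_{\mathrm{SDF}}$, which is the claimed identity. Note that this equality of input vectors holds along any trajectory satisfying the state-feedback closed loop, so the state responses under~\eqref{eq: SF} and~\eqref{eq: SDF} coincide.

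The part requiring care is the well-posedness of the derivative-feedback loop, which is where Assumptions~(i) and~(iii) enter. Closing the loop with $u_{\mathrm{SDF}}$ gives $(I+BK_n)\dot{x} = Ax + BN_n r$, so $\dot{x}$ is uniquely defined only when $I+BK_n$ is invertible. I would verify the identity $I+BK_n = A(A-BK_s)^{-1}$, which is nonsingular precisely because $A$ is nonsingular (Assumption~(i)) and $A-BK_s$ is nonsingular (Assumption~(ii)); this guarantees the SDF law can actually be implemented and that it reproduces the SF trajectory rather than being a purely formal rewriting. Assumption~(iii), the full column rank of $B$, is the standard nondegeneracy condition of the SDF framework, ensuring the feedback is non-redundant and the construction is well defined. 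The main obstacle is therefore not a deep argument but correct bookkeeping of the three invertibility conditions, after which the equivalence follows by direct matrix algebra.
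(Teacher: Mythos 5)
Your derivation is correct and is the standard argument for this result: close the loop under $u_{\mathrm{SF}}$, use Assumption~(ii) to invert $A-BK_s$ and express $x$ in terms of $\dot{x}$ and $r$, then substitute back and match coefficients to read off $K_n$ and $N_n$. The paper itself states this theorem without proof (it is imported from \cite{cardim2007design}), so there is nothing to diverge from; your additional check that $I+BK_n = A(A-BK_s)^{-1}$ is nonsingular under Assumptions~(i)--(ii) is a worthwhile well-posedness remark that goes slightly beyond what the paper discusses.
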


If Assumption~(i) does not hold, the uncontrolled dynamics $\dot{x}(t)=A x(t)$ would not permit full reconstruction of $x(t)$ from $\dot{x}(t)$, preventing derivative feedback from fully regulating the state. Hence, non-singularity of $A$ is necessary for the existence of SDF control, with sufficiency being established in~\cite{Kiritsis2023SDF}.
Assumption~(ii) ensures that the system in~\eqref{eq: LTI} remains asymptotically stable under $u_{\mathrm{SDF}}$ whenever it is stable under $u_{\mathrm{SF}}$. {\color{black} Assumption (iii) ensures
that the input channels are
independent.}
These assumptions are justified in the context of power systems in Section~\ref{sec: SDF control}.

\section{SDF Control for Oscillation Damping}\label{sec: SDF control}
This section adapts result in Theorem~\ref{thm: SDF} to solve the LFO
damping problem. 
To this end, two aspects must be considered: first, to account for the disturbance $\Delta P$ that appears in \eqref{eq: MMPS} but not in \eqref{eq: LTI}; and second, to justify if the assumptions underlying Theorem~\ref{thm: SDF} hold for the power-system.
We begin with the first aspect and present our key result.  

 
\begin{lemma}\label{lma: SDF_power}
Consider the system $\dot{x} = A x + B (u + \Delta P)$, where $\Delta P$ denotes a disturbance. 
Let $u_{\mathrm{SF}} = -K_s x$ and define the SDF control law as $u_{\mathrm{SDF}} = -K_n \dot{x} + K_n B (\Delta P)$. 
Then, under Assumptions~(i)--(iii), $u_{\mathrm{SF}} = u_{\mathrm{SDF}}$.
\end{lemma}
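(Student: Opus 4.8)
The plan is to carry out a direct algebraic substitution that mirrors the construction underlying Theorem~\ref{thm: SDF}, the only new ingredient being careful bookkeeping of how the disturbance $\Delta P$ propagates. First I would substitute the SF law $u_{\mathrm{SF}} = -K_s x$ into the given dynamics $\dot{x} = Ax + B(u + \Delta P)$ to obtain the closed-loop relation
\[
\dot{x} = (A - B K_s)\, x + B\,\Delta P,
\]
which is the single identity linking the state $x$ to its derivative $\dot{x}$ along trajectories driven by the SF controller.

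Next, since Assumption~(ii) guarantees that $(A - B K_s)$ is nonsingular, I would invert this relation to express the state purely in terms of the measurable derivative and the disturbance,
\[
x = (A - B K_s)^{-1}\bigl(\dot{x} - B\,\Delta P\bigr).
\]
Substituting this back into $u_{\mathrm{SF}} = -K_s x$ and invoking the definition $K_n = K_s (A - B K_s)^{-1}$ from~\eqref{eq: Kn} then gives
\[
u_{\mathrm{SF}} = -K_n \dot{x} + K_n B\,\Delta P,
\]
which is exactly the proposed $u_{\mathrm{SDF}}$, establishing the claimed equality.

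The point to get right --- rather than a genuine obstacle --- is tracing the disturbance term: because $\Delta P$ enters through the \emph{same} input channel $B$ as the control, it survives the inversion and reappears as the feedforward correction $K_n B\,\Delta P$. This term is the disturbance-rejection analogue of the reference-tracking correction $N_n r$ appearing in Theorem~\ref{thm: SDF}, and it is precisely why $u_{\mathrm{SDF}}$ must carry the extra $K_n B\,\Delta P$ summand in order to reproduce the SF response exactly; omitting it would leave a residual $K_n B\,\Delta P$ mismatch. I would also remark that Assumptions~(i) and~(iii) are not strictly invoked in this computation but are retained from Theorem~\ref{thm: SDF} to guarantee well-posedness of the gain $K_n$ and of the overall SDF design, so that the lemma rests on the same hypotheses as the base result.
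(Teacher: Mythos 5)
Your proof is correct, but it takes a different route from the paper's. The paper proves the lemma by reduction to Theorem~\ref{thm: SDF}: it sets $N_s = I$ and $r = \Delta P$, observes that the lemma's system and control laws then coincide with the system~\eqref{eq: LTI} under the laws~\eqref{eq: SF} and~\eqref{eq: SDF} (the disturbance being absorbed into the reference channel, with $N_n r = (I + K_n B)\Delta P$ splitting into the $\Delta P$ already present in the plant input and the feedforward term $K_n B\,\Delta P$), and then invokes the cited theorem as a black box. You instead re-derive the result from scratch: substitute $u_{\mathrm{SF}}$ into the dynamics, invert $(A - BK_s)$ using Assumption~(ii), and read off $u_{\mathrm{SDF}}$ directly. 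Your version is self-contained and makes the mechanism transparent --- in particular it exposes exactly where the feedforward correction $K_n B\,\Delta P$ comes from, which the paper's two-line reduction leaves implicit --- at the cost of duplicating the argument behind Theorem~\ref{thm: SDF} rather than reusing it. Your closing remark is also accurate and slightly sharper than the paper's own discussion: only Assumption~(ii) is used in the algebra, while Assumption~(i) is what makes the closed loop under $u_{\mathrm{SDF}}$ well posed as an explicit ODE (since $I + BK_n = A(A - BK_s)^{-1}$, invertibility of $I + BK_n$ is equivalent to that of $A$ given~(ii)).
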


\begin{proof}
Set $N_s = I_{m \times m}$ in~\eqref{eq: Nn} and choose the reference $r = \Delta P$. Then, the closed-loop dynamics of $\dot{x} = A x + B (u + \Delta P)$ under $u_{\mathrm{SF}}$ and $u_{\mathrm{SDF}}$ defined in the hypothesis are identical to those of the system in~\eqref{eq: LTI} under the control laws~\eqref{eq: SF} and~\eqref{eq: SDF}, respectively. 
The conclusion follows from Theorem~\ref{thm: SDF}.
\end{proof}

We now comment on the utility of Lemma~\ref{lma: SDF_power} for the damping problem. Suppose the complete state $x = (\Delta \delta, \Delta \omega)$ is available. 
Then, using pole-placement or linear-quadratic regulator (LQR) methods, one can design SF control $u_\text{SF} = -K_s x$ to achieve performance equal to or better than a
fixed-architecture controller (such as the one developed in~\cite{neely2013benefits}). For instance, for the two-area system,
by selecting,
\begin{align}
K_s =
\begin{bmatrix}
0 & 0 & k_d & -k_d \\
0 & 0 & -k_d & k_d
\end{bmatrix}, 
\end{align}
we recover the control scheme presented in \eqref{eq: simple_control_two_area_system}. 

However, estimating relative rotor angles is more challenging for the reasons mentioned in Section \ref{Intro}, 
limiting the applicability of $u_{\mathrm{SF}}$. 
Lemma~\ref{lma: SDF_power} shows that equivalent closed-loop performance can be achieved with the SDF control $u_{\mathrm{SDF}}$ by using the state derivatives as feedback. The required state derivatives $\Delta\dot{\delta}=\Delta \omega$ (speed deviations) and $\Delta \dot{\omega}$ (acceleration) can be easily estimated from PMUs using FDF.
Finally, using $u_{\mathrm{SDF}}$ in Lemma~\ref{lma: SDF_power} requires us to know the disturbance $\Delta P$, which {\color{black} can be estimated from prior information about the disturbance type or using an unknown-input observer \cite{abooshahab2019disturbance}}.

The above discussion, together with Lemma~\ref{lma: SDF_power}, suggests a straightforward procedure for designing SDF control.
First, compute a static gain $K_s$ using any standard state-feedback method (we use LQR in simulations), and then use this $K_s$ to obtain the SDF gain matrices given by \eqref{eq: SDF_gains}.
It is worth noting a subtle but important distinction between the structures of $u_{\mathrm{SF}}$ and $u_{\mathrm{SDF}}$: while $u_{\mathrm{SF}}$ is purely feedback-based, $u_{\mathrm{SDF}}$ includes an additional feed-forward term $K_n B \Delta P$.
This does not represent a limitation but rather reflects the slightly higher implementation complexity inherent to SDF control.

We now justify the assumptions of Lemma~\ref{lma: SDF_power}.
Assumption~(ii) is standard for any control design problem, while
Assumption~(iii) holds for the power system control problem since the input matrix $B$ has full column rank owing to the invertibility of the inertia matrix $M$ in~\eqref{eq: MMPS}.
However, Assumption~(i) needs $A$ to be non-singular, which
which might not hold for the power system in~\eqref{eq: MMPS}. The singularity arises from the matrix $T$, which has a Laplacian structure and is singular due to the angle invariance of multi-machine systems. We now show that $A$ is non-singular iff $T$ is non-singular. 

\begin{proposition}\label{prop}
Consider the system matrix $A$ in \eqref{eq: MMPS} with non-singular $M$. Then $A$ is non-singular iff $T$ is non-singular. 
\label{Prop}
\end{proposition}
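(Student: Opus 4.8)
The plan is to characterize the singularity of $A$ directly through its null space rather than through a Schur-complement determinant expansion, because the top-left block of $A$ is the zero matrix and the textbook formula $\det A = \det(A_{11})\det(A_{22} - A_{21} A_{11}^{-1} A_{12})$ requires $A_{11}$ to be invertible. Partitioning a candidate null vector as $(v_1, v_2)$ conformably with the block structure of $A$, I would write out $A (v_1, v_2)^\top = 0$ to obtain the two block equations $v_2 = 0$ and $-M^{-1} T v_1 - M^{-1} D v_2 = 0$.

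The first equation forces $v_2 = 0$, and substituting into the second leaves $-M^{-1} T v_1 = 0$. Since $M$ is non-singular by hypothesis, $M^{-1}$ is invertible, so this reduces to $T v_1 = 0$. Hence every null vector of $A$ has the form $(v_1, 0)$ with $v_1 \in \ker T$, and conversely each $v_1 \in \ker T$ produces the null vector $(v_1, 0)$ of $A$. This gives a linear isomorphism $\ker A \cong \ker T$, from which $A$ is non-singular iff $\ker A = \{0\}$ iff $\ker T = \{0\}$ iff $T$ is non-singular.

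As an alternative that also yields an explicit determinant identity, I would exploit the invertible identity block sitting in the top-right position. Swapping the $i$-th block row with the $(N+i)$-th for $i = 1, \dots, N$ (a product of $N$ disjoint transpositions, contributing a factor $(-1)^N$) brings $A$ to the block-triangular form $\begin{bmatrix} -M^{-1} T & -M^{-1} D \\ 0 & I \end{bmatrix}$, whose determinant is $\det(-M^{-1} T)$. Tracking the sign factors then gives the clean relation $\det A = \det(T) / \det(M)$, making the equivalence immediate once $\det M \neq 0$.

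I do not anticipate a serious obstacle here; the only subtlety is recognizing that the zero top-left block rules out the familiar Schur complement, so the argument must instead pivot on either the null-space structure or the invertible off-diagonal identity block. The kernel argument is the cleaner route and generalizes transparently to arbitrary $N$, matching the paper's claim that the analysis is dimension-agnostic.
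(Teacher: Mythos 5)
Your proposal is correct, and both of your routes differ from the paper's argument in a way worth noting. The paper swaps the two \emph{block columns} of $A$ to obtain $A' = \bigl[\begin{smallmatrix} I & 0 \\ -M^{-1}D & -M^{-1}T \end{smallmatrix}\bigr]$ and then reasons about column rank: the identity block contributes rank $N$, is independent of the second block column, and the second block column has full column rank iff $M^{-1}T$ (hence $T$) does. Your kernel argument reaches the same conclusion by exhibiting the explicit isomorphism $\ker A \cong \ker T$ via $v_1 \mapsto (v_1, 0)$; this is tighter than the paper's phrasing, which asserts the independence of the two block columns without justification and whose closing sentence states only one direction of the equivalence. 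Your second route (block-row swap to a block-triangular form) is essentially the dual of the paper's column swap, but it buys the explicit identity $\det A = \det(T)/\det(M)$ after the sign factors $(-1)^N$ from the permutation and from $\det(-M^{-1}T)$ cancel --- a sharper statement than the bare iff, and one that quantifies how close $A$ is to singularity as the self-stiffness terms $\tilde{T}_{ii}$ shrink. Your observation that the standard Schur complement in $A_{11}$ is unavailable (since $A_{11}=0$, and the complement in $A_{22}=-M^{-1}D$ would need $D$ invertible, which is not assumed) is also correct and explains why some such detour is needed.
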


\begin{proof}
Swapping the two block columns of $A$ yields
\begin{align}
A' =
\begin{bmatrix}
I & 0 \\
-M^{-1} D & -M^{-1} T
\end{bmatrix}_{2N\times 2N}.
\end{align}
The first block column matrix of $A'$ has rank $N$ and is linearly independent of the second block. The second column block has full column rank iff $M^{-1}T$ is full column rank. Since $M$ is square and invertible, $M^{-1}T$ has full column rank iff $T$ does. Thus, $A'$ (and hence $A$) is full column rank.
\end{proof}

We make $T$ non-singular by adding self-stiffness terms to the diagonal making it diagonally dominant, and hence, non-singular \cite{HornJohnson2013}. 
Physically this means that the swing-equation of the $i$-th area has the stiffness term $\tilde{T}_{ii}\Delta\delta_i$. These terms contribute to small shunt/load connections at each area bus, and can be tuned by changing the transient reactance of the area. 
Note that not all $\tilde{T}_{ii}\ne 0$. The simulation section provides details on their selection. We conclude this section with an illustration of the above proposition for the two-area system.

\begin{example}
For the two-area system, the system matrix is
\begin{align}
A =
\begin{bmatrix}
0 & 0 & 1 & 0 \\
0 & 0 & 0 & 1 \\
-\frac{T_{12}+\tilde{T}_{11}}{M_1} & \frac{T_{12}}{M_1} & -\frac{D_1}{M_1} & 0 \\
\frac{T_{12}}{M_2} & -\frac{T_{12}+\tilde{T}_{22}}{M_2} & 0 & -\frac{D_2}{M_2}
\end{bmatrix}.
\end{align}
When $\tilde{T}_{11}=\tilde{T}_{22}=0$, the first two columns are negatives of each other and hence linearly dependent, making $A$ singular regardless of the values of the inertia coefficients or the synchronizing torque $T_{12}$. 
Introducing non-zero $\tilde{T}_{ii}$ terms renders these columns linearly independent, and the earlier proposition then guarantees the nonsingularity of $A$.
\end{example}

\section{Simulation Results and Discussion}
We show the performance of SDF damping control through time-domain simulations of the two- and three-area power systems described in~\cite{neely2013benefits}. For benchmarking, the frequency-difference (FD) control \cite{neely2013benefits} is included for comparison. As discussed in Section~\ref{sec: SDF control}, computing the SDF gain matrix $K_n$ in~\eqref{eq: Kn} requires first obtaining the static feedback gain $K_s$, which was computed using the LQR command in MATLAB with weighting matrices $Q = \mathrm{diag}[10,\,1,\,10,\,1]$ and $R = 2I_m$.

We begin with the two-area system; herein, the area inertia constants \( M_1 = M_2 = 6\,\mathrm{s} \), damping coefficients \( D_1 = D_2 = 1.2 \), and the synchronizing coefficient \( T_{12} = 3.132 \). To ensure full-rank system dynamics (see Proposition \ref{prop}), we add self-stiffness terms \( \tilde{T}_{11} = \varepsilon_1 T_{12} \) and \( \tilde{T}_{22} = \varepsilon_2 T_{21} \), where \( \varepsilon_{1}, \varepsilon_{2} \in [0.03, 0.10] \). Recall that $\Delta\omega\approx \Delta f$ (see Section \ref{sec: prelims}).

\vspace{-1em}

\subsection{Small Load Disturbance}
At $t = 5\mathrm{s}$, the load in A1 is changed
from $0.0$ to $-0.01$~p.u. and restored to its nominal value at $t = 7\mathrm{s}$.
The \textit{identical} SF and SDF control signals, shown in Fig.~\ref{fig:Control}, validate Lemma~\ref{lma: SDF_power}.
From Fig.~\ref{fig:Nominal Disturbance}, we observe that
inter-area oscillation damping (captured by $\Delta f_2 - \Delta f_1$ in the bottom-row) is comparable for both FD and SDF controls, with SDF exhibiting slightly lower peaks. More importantly, for intra-area
quantities $\Delta \delta$ and $\Delta f$, {\color{black} SDF provides superior damping performance, characterized by reduced overshoot and faster settling (see the last column in Table \ref{tab_quantitiave})}. These benefits are achieved with only a modest increase in control effort.

\begin{figure}
    \centering
    \includegraphics[width=0.8\linewidth]{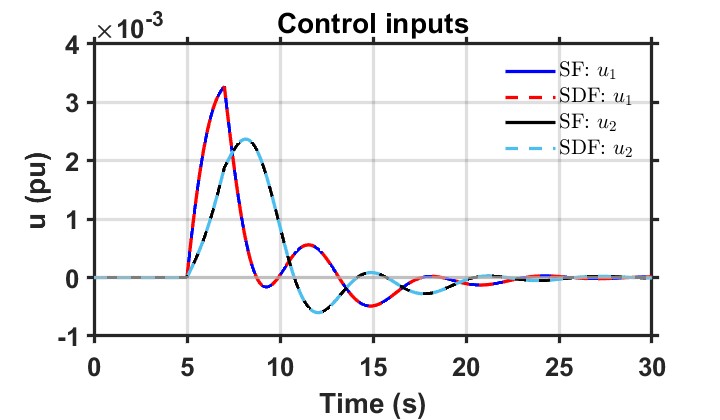}
    \vspace{-0.5em}
    \caption{\small Control inputs for the two-area system under SF and SDF.}
    \label{fig:Control}
    \vspace{-1em}
\end{figure}
\begin{figure}
    \centering
    \includegraphics[width=1\linewidth]{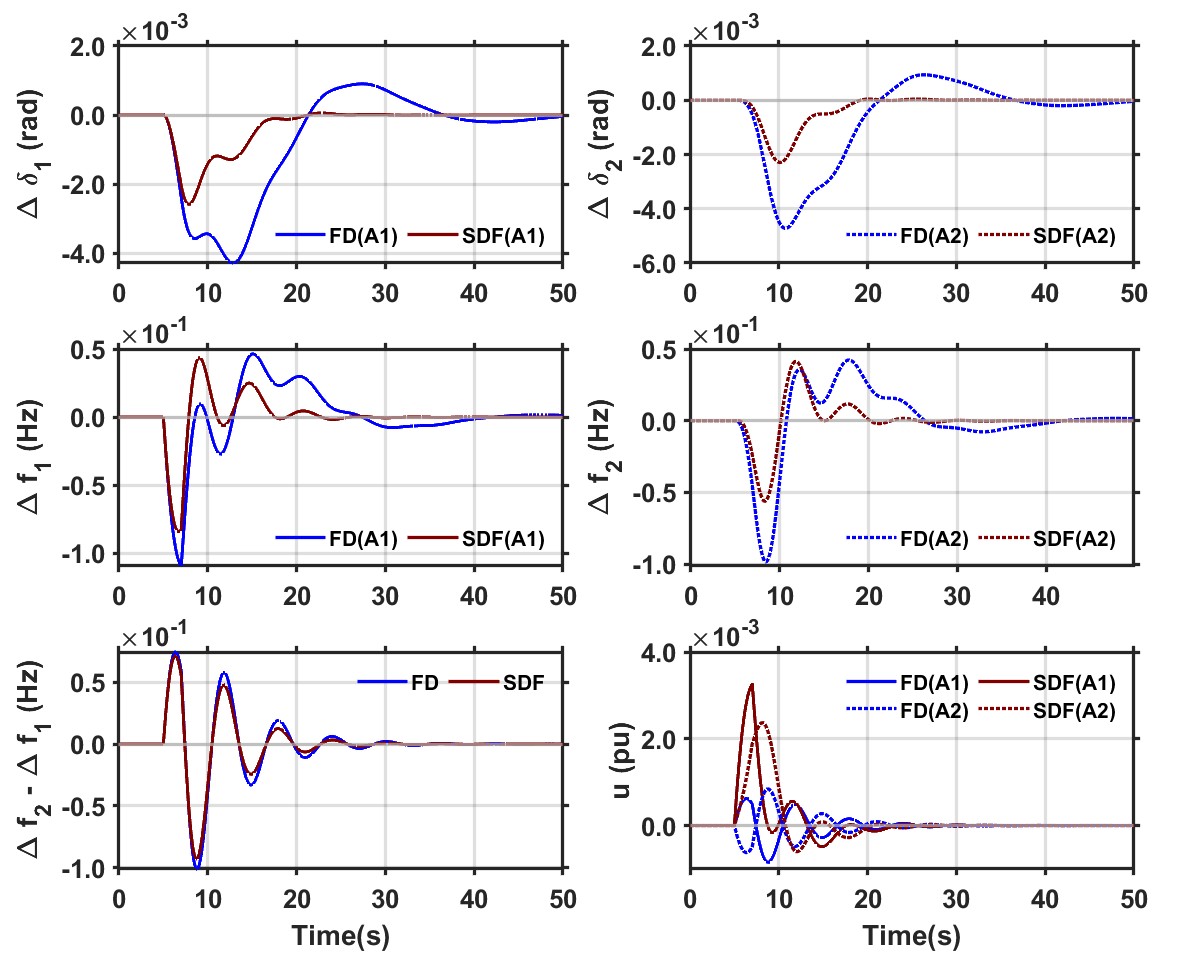}
    \vspace{-1.5em}
    \caption{\small System responses for \(1\%\) load increase in Area 1. In the legend, A$\mathrm{j}$ refers to the $\mathrm{j}$-th area.}
    \label{fig:Nominal Disturbance}
\end{figure} 

\begin{table}[ht!]
\centering
\caption{\small Maximum peak deviation and transient time comparison of FD and SDF controllers}
\begin{tabular}{lccccc}
\toprule
\multirow{2}{*}{\textbf{Signal}} &
\multicolumn{2}{c}{\textbf{Peak Deviation}} &
\multicolumn{3}{c}{\textbf{Transient Time (s)}} \\
\cmidrule(lr){2-3} \cmidrule(lr){4-6}
 & FD & SDF & FD & SDF & Improvement (\%) \\
\midrule
$\Delta\delta_1$ (rad) & 0.0043 & 0.0026 & 48.06 & 23.53 & 51.04 \\
$\Delta\delta_2$ (rad) & 0.0047 & 0.0023 & 47.76 & 18.76 & 60.72 \\
$\Delta f_1$ (Hz)      & 0.1093 & 0.0840 & 39.04 & 24.27 & 37.83 \\
$\Delta f_2$ (Hz)      & 0.0983 & 0.0561 & 39.69 & 24.70 & 37.77 \\
$\Delta f_{21}$ (Hz)   & 0.1011 & 0.0929 & 30.15 & 25.02 & 17.02 \\
\bottomrule
\end{tabular}\label{tab_quantitiave}
\end{table}

To assess robustness of the SDF control under measurement uncertainty, we introduce additive Gaussian noise to the states and their derivatives appropriately. The noise bounds follow PMU steady-state accuracy requirements at a nominal frequency of \(f_{\mathrm{nom}} = 60\,\mathrm{Hz}\), with \(3\sigma_f \leq 0.005\,\mathrm{Hz}\) and \(3\sigma_\delta \leq 0.573^{\circ}\)\cite{IEEEStandards}.  
The system responses in Fig.~\ref{fig:Measurement Error} show trends similar to the noise-free case, though the control inputs exhibit slight jitter, warranting further study using techniques such as low-pass filtering.
\begin{figure}
    \centering
    \includegraphics[width=1\linewidth]{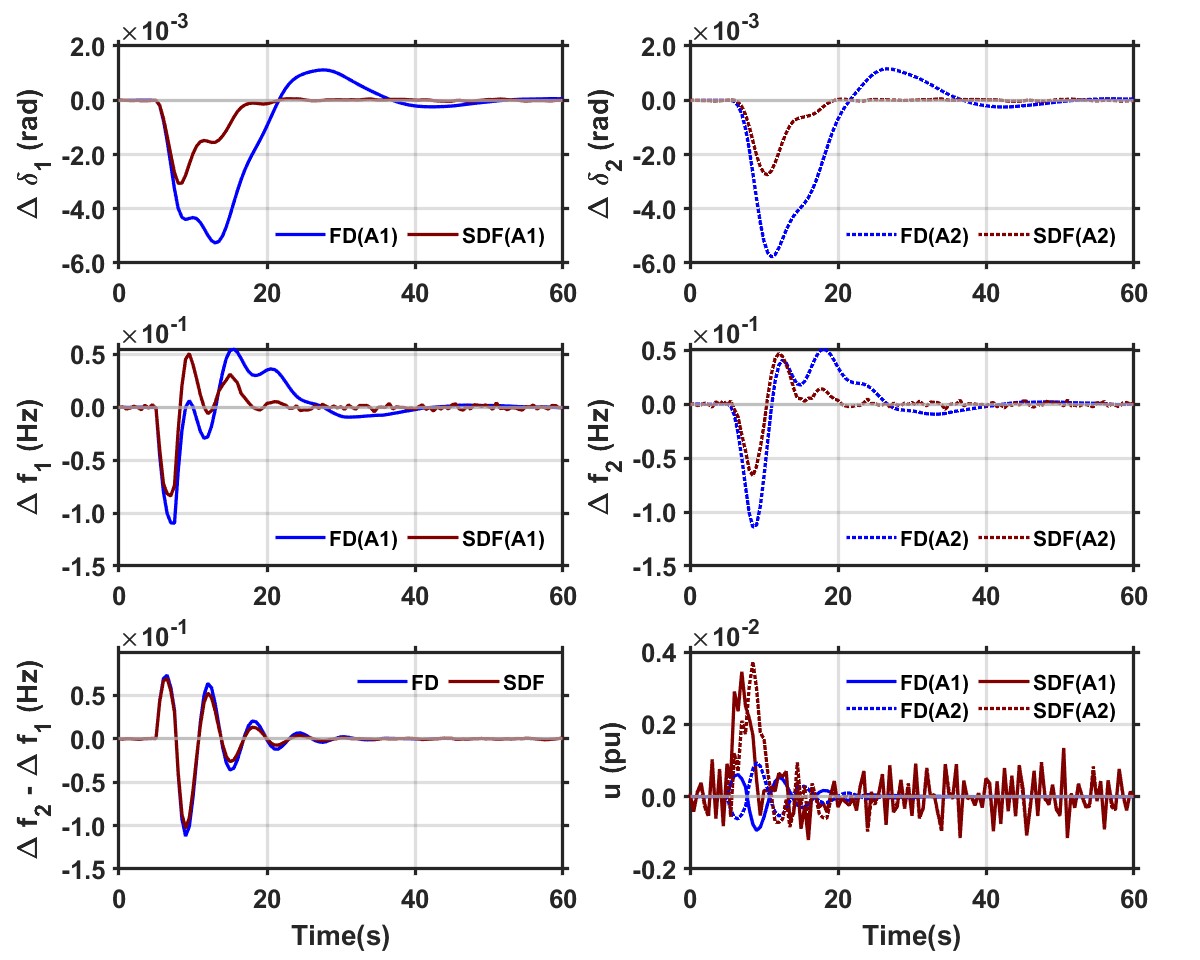}
    \vspace{-1.75em}
    \caption{\small System responses under measurement noise consistent with PMU accuracy limits.}
    \label{fig:Measurement Error}
    \vspace{-1em}
\end{figure}
\setlength{\textfloatsep}{5pt plus 2pt minus 2pt} 

The performance of the SDF is further evaluated under more realistic disturbances, including high-intensity fault currents and periodic large load fluctuations.

\begin{figure}
    \centering
    \includegraphics[width=1\linewidth]{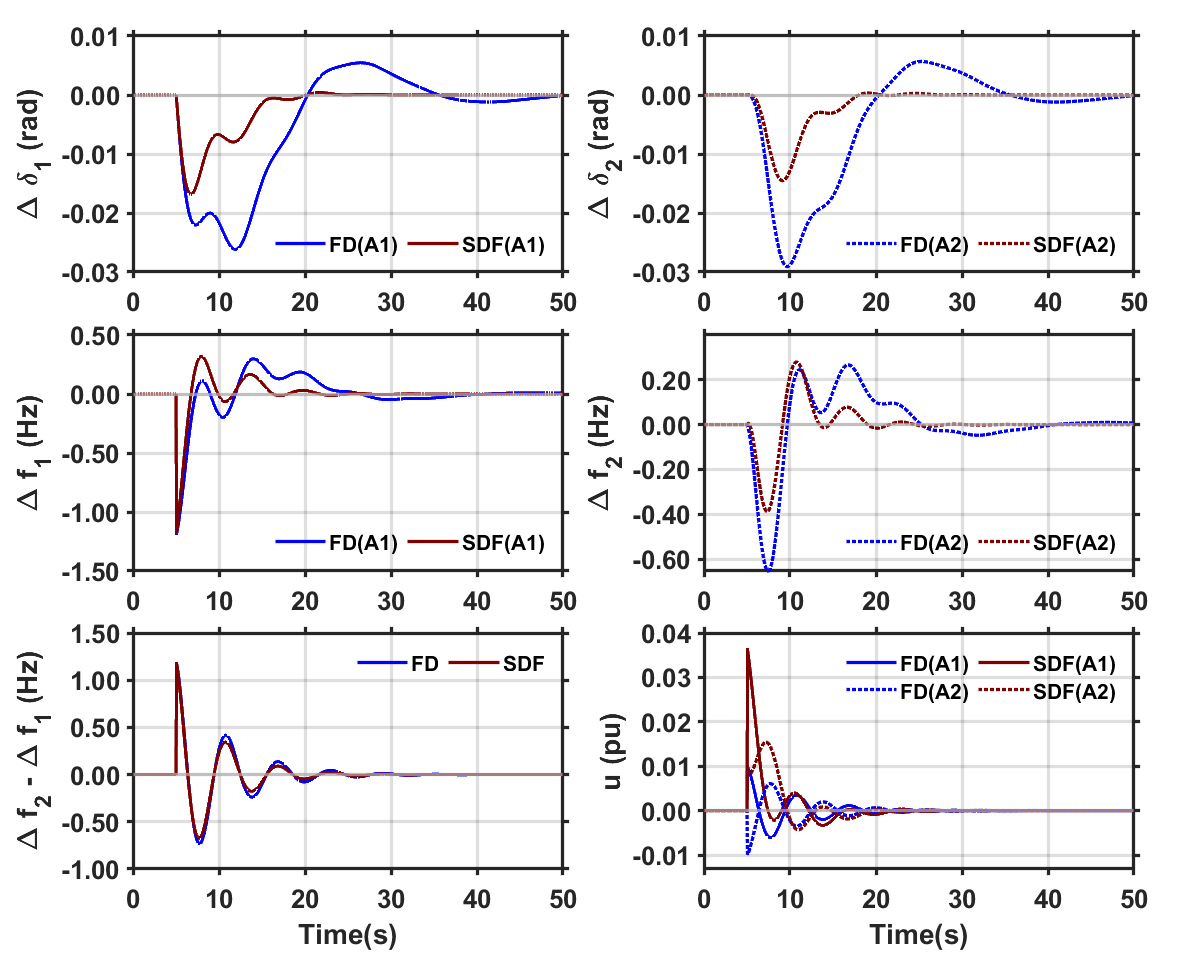}
    \vspace{-1.75em}
    \caption{\small System responses to a short-duration fault in Area 1}
    \label{fig:Fault}
\end{figure}

\begin{figure*}[t]
    \centering
    \includegraphics[width=1\linewidth]{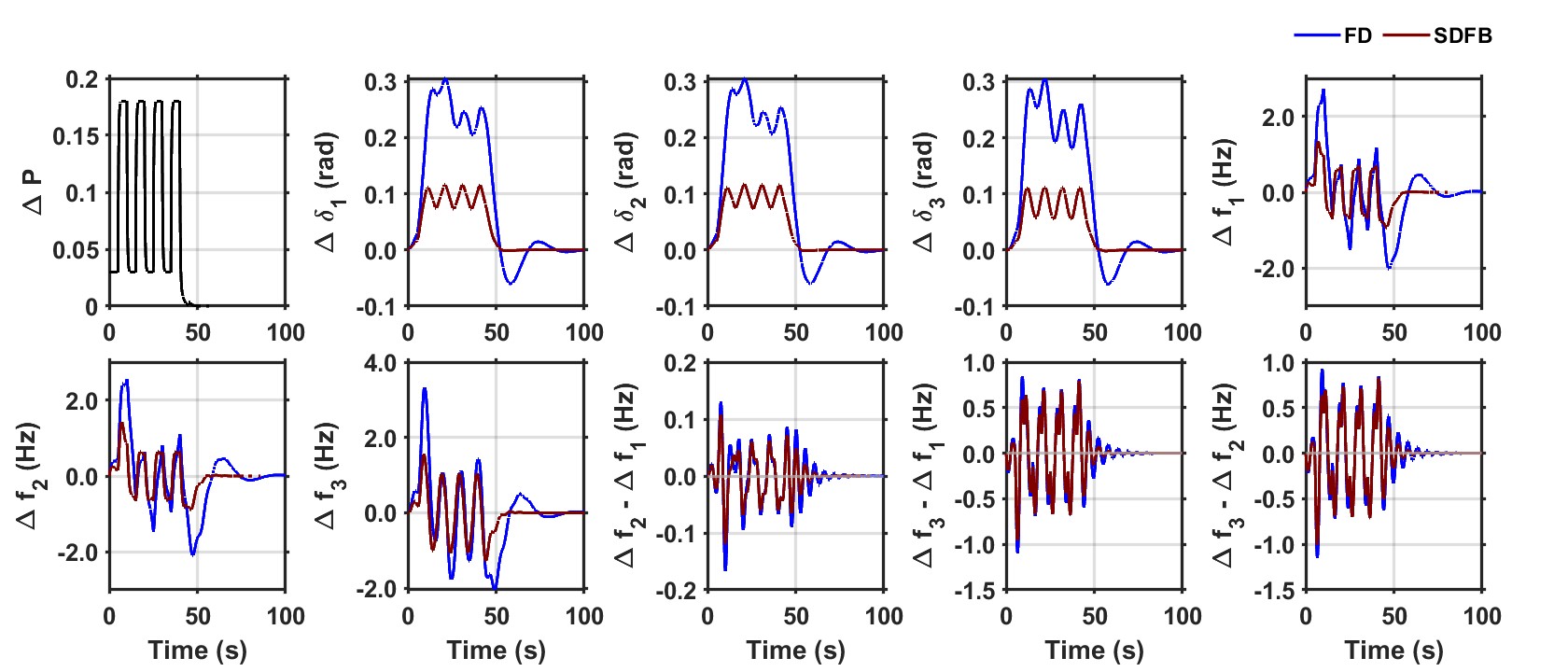}
    \vspace{-1.75em}
    \caption{\small System response under AI data-center load fluctuation.}
    \label{fig:Data Center Load}
    \vspace{-1.25em}
\end{figure*}
\setlength{\textfloatsep}{5pt plus 2pt minus 2pt} 

\subsection{Fault Disturbance} 
We study a fault disturbance with an amplitude two times larger than the nominal ambient variation. This disturbance is applied at \(t = 5\,\mathrm{s}\) and cleared at \(t = 5.1\,\mathrm{s}\). The time-domain responses in Fig.~\ref{fig:Fault} again show trends similar to Fig.~\ref{fig:Nominal Disturbance}; however, the transient control effort of SDF is higher than that of FD, which is expected and reasonable given the improved system responses. 

\subsection{Data Center Load Fluctuation}
We consider load fluctuation of a data center in A1 of the three-area system \cite{neely2013benefits}. The fluctuation is characterized by burst-type load change between $0.03$~p.u. and $0.18$~p.u. occurring in successive cycles over the interval \(t = 0\)–\(40\,\mathrm{s}\). Even for this extreme scenario,
our SDF control achieved better system response
(see Fig.~\ref{fig:Data Center Load}).

\section{Conclusion}
We examined the effectiveness of SDF control in improving the damping of inter-area and intra-area oscillations (in frequency and angle differences) in multi-area power systems. The SDF control is easy to implement: an LQR-based optimal feedback gain is first computed, followed by an algebraic transformation to obtain the SDF gain matrix. Numerical studies demonstrate that SDF outperforms the frequency-difference (FD) control scheme, underscoring its practical value using readily available measurements such as frequency and RoCoF. 

Future work will be on: (i) developing SDF for the second-order swing equations rather than their first-order representations in~\eqref{eq: MMPS}, (ii) handling low-inertia systems as in~\cite{anguluri2022parameter}; and (iii) evaluating the SDF control on larger-scale power system models such as the 240-bus reduced WECC system \cite{yuan2020developing}.

\bibliographystyle{IEEEtran}
\bibliography{ref.bib}

@inproceedings{neely2013benefits,
  title={The Benefits of Energy Storage Combined with {HVDC} Transmission Power Modulation for Mitigating Inter-Area Oscillations},
  author={Neely, J and Elliott, R and Byrne, R and Schoenwald, D and Trudnowski, D},
  booktitle={Electrical Energy Storage Applications and Technologies Conf., San Diego, CA},
  year={2013}
}

@inproceedings{abooshahab2019disturbance,
  title={Disturbance and state estimation in partially known power networks},
  author={Abooshahab, Mohammad Ali and Hovd, Morten and Bitmead, Robert R},
  booktitle={IEEE Conference on Control Technology and Applications (CCTA)},
  pages={98--105},
  year={2019},
  organization={IEEE}
}

@article{liu2020model,
  title={Model-independent derivative control delay compensation methods for power systems},
  author={Liu, Muyang and Dassios, Ioannis and Tzounas, Georgios and Milano, Federico},
  journal={Energies},
  volume={13},
  number={2},
  pages={342},
  year={2020},
  publisher={MDPI}
}

@inproceedings{anguluri2022parameter,
  title={Parameter Estimation in Ill-conditioned Low-inertia Power Systems},
  author={Anguluri, Rajasekhar and Sankar, Lalitha and Kosut, Oliver},
  booktitle={2022 North American Power Symposium (NAPS)},
  pages={1--6},
  year={2022},
  organization={IEEE}
}

@article{rakhshani2016inertia,
  title={Inertia emulation in {AC/DC} interconnected power systems using derivative technique considering frequency measurement effects},
  author={Rakhshani, Elyas and Rodriguez, Pedro},
  journal={IEEE Transactions on Power Systems},
  volume={32},
  number={5},
  pages={3338--3351},
  year={2016},
  publisher={IEEE}
}

@article{Kiritsis2023SDF,
  author    = {Konstadinos H. Kiritsis},
  title     = {Stabilization of Linear Time-Invariant Systems by State-Derivative Feedback},
  journal   = {WSEAS Transactions on Systems and Control},
  volume    = {18},
  pages     = {65--72},
  year      = {2023}
}

@book{HornJohnson2013,
  author    = {Roger A. Horn and Charles R. Johnson},
  title     = {Matrix Analysis},
  edition   = {2nd},
  publisher = {Cambridge University Press},
  address   = {Cambridge, UK},
  year      = {2013},
  isbn      = {978-0-521-54823-6}
}

@article{cardim2007design,
  title={Design of state-derivative feedback controllers using a state feedback control design},
  author={Cardim, Rodrigo and Teixeira, Marcelo CM and Assun{\c{c}}{\~A}o, Edvaldo and Covacic, M{\'a}rcio R},
  journal={IFAC Proceedings Volumes},
  volume={40},
  number={20},
  pages={22--27},
  year={2007},
  publisher={Elsevier}
}

@misc{IEEEStandards,
  author={},
  journal={IEC/IEEE 60255-118-1:2018}, 
  title={{IEEE/IEC} International Standard - Measuring relays and protection equipment - Part 118-1: Synchrophasor for power systems - Measurements}, 
  year={2018},
  volume={},
  number={},
  pages={1-78},
  keywords={IEEE Standards;IEC Standards;Power measurement;Relays;Power systems;Synchrophasors},
  doi={10.1109/IEEESTD.2018.8577045}}

@article{pal2013applying,
  title={Applying a robust control technique to damp low frequency oscillations in the {WECC}},
  author={Pal, Anamitra and Thorp, James S and Veda, Santosh S and Centeno, VA},
  journal={International Journal of Electrical Power \& Energy Systems},
  volume={44},
  number={1},
  pages={638--645},
  year={2013},
  publisher={Elsevier}
}

@ARTICLE{pierre2019design,
  author={Pierre, Brian J. and Wilches-Bernal, Felipe and Schoenwald, David A. and Elliott, Ryan T. and Trudnowski, Daniel J. and Byrne, Raymond H. and Neely, Jason C.},
  journal={IEEE Transactions on Power Systems}, 
  title={Design of the Pacific {DC} Intertie Wide Area Damping Controller}, 
  year={2019},
  volume={34},
  number={5},
  pages={3594-3604},
  keywords={Phasor measurement units;Damping;Real-time systems;Frequency measurement;Control systems;Oscillators;Frequency modulation;HVDC transmission control;damping control;phasor measurement unit;power system dynamic stability;control systems;real time systems;small signal stability;wide area networks;oscillations;HVDC;high voltage DC;PMU;WECC;delay;wide area control;modal analysis;real-time control;western interconnection},
  doi={10.1109/TPWRS.2019.2903782}}

@article{milano2017frequency,
  title={Frequency Divider},
  author={Milano, Federico and Ortega, Alvaro},
  journal={IEEE Transactions on Power Systems},
  volume={32},
  number={2},
  pages={1493--1501},
  year={2017},
  publisher={IEEE}
}

@INPROCEEDINGS{sharma2025practical,
  author={Sharma, Anushka and Pal, Anamitra and Anguluri, Rajasekhar and Chakraborty, Tamojit},
  booktitle={2025 IEEE Power \& Energy Society General Meeting (PESGM)}, 
  title={A Practical Approach Towards Inertia Estimation Using Ambient Synchrophasor Data}, 
  year={2025},
  volume={},
  number={},
  pages={1-5},
  keywords={Damping;Uncertainty;Power measurement;Estimation;Frequency conversion;Mechanical variables measurement;Phasor measurement units;Mathematical models;Frequency estimation;Power systems;Ambient data;Bounded data uncertainty;Frequency divider;Inertia estimation;Phasor measurement unit},
  doi={10.1109/PESGM52009.2025.11225304}}

@ARTICLE{chai2025robust,
  author={Chai, Bo and Chan, Shing Chow and Hou, Yunhe},
  journal={IEEE Transactions on Power Systems}, 
  title={Robust Adaptive Fading Unscented {Kalman} Filter for Decentralized Dynamic State Estimation in Power Systems Under Unknown Inputs and Covariance Mismatches}, 
  year={2025},
  volume={40},
  number={6},
  pages={4732-4745},
  keywords={Fading channels;Noise measurement;Estimation;Covariance matrices;Power system stability;Kalman filters;Heuristic algorithms;Complexity theory;Measurement uncertainty;Decentralized dynamic state estimation (DSE);adaptive fading (AF) Kalman filter;covariance adaptation;bad data},
  doi={10.1109/TPWRS.2025.3570748}}

@article{xu2021direct,
  title={Direct damping feedback control using power electronics-interfaced resources},
  author={Xu, Xin and Sun, Kai},
  journal={IEEE Transactions on Power Systems},
  volume={37},
  number={2},
  pages={1113--1125},
  year={2021},
  publisher={IEEE}
}

@article{abdelaziz2005state,
  title={State derivative feedback by {LQR} for linear time-invariant systems},
  author={Abdelaziz, Taha HS and Val{\'a}{\v{s}}ek, Michael},
  journal={IFAC Proceedings Volumes},
  volume={38},
  number={1},
  pages={435--440},
  year={2005},
  publisher={Elsevier}
}

@article{da2012less,
  title={Less Conservative Control Design for Linear Systems with Polytopic Uncertainties via State-Derivative Feedback},
  author={Da Silva, Emerson RP and Assun{\c{c}}ao, Edvaldo and Teixeira, Marcelo CM and Buzachero, Luiz Francisco S},
  journal={Mathematical Problems in Engineering},
  volume={2012},
  number={1},
  pages={315049},
  year={2012},
  publisher={Wiley Online Library}
}

@inproceedings{yuan2020developing,
  title={Developing a reduced 240-bus {WECC} dynamic model for frequency response study of high renewable integration},
  author={Yuan, Haoyu and Biswas, Reetam Sen and Tan, Jin and Zhang, Yingchen},
  booktitle={2020 IEEE/PES Transmission and Distribution Conference and Exposition (T\&D)},
  pages={1--5},
  year={2020},
  organization={IEEE}
}

\end{document}